\documentclass[preprint,12pt]{larticle}
\usepackage[a4paper, total={6.18in, 10in}]{geometry}
\usepackage[utf8]{inputenc}

\usepackage{amssymb}
\usepackage{amsthm}
\usepackage{amsmath}
\usepackage{tikz}

\usepackage[symbol]{footmisc}

\usepackage{xcolor}

\newcommand{\cS}[1]{\mathcal{S}_{#1}}
\newcommand{\RR}{\mathbb{R}}

\newtheorem{lemma}{Lemma}
\newtheorem{theorem}{Theorem}

\begin{document}

\begin{frontmatter}

\title{Polynomial encoding of rooted trees with branch lengths}

\author{Pengyu Liu{\footnotesize$^{1,2}$\footnote{To whom correspondence should be addressed; e-mail: pengyu.liu@uri.edu.}}, Joan Carles Pons{\footnotesize{$^3$}}, Gerard Ribas{\footnotesize{$^4$}}}

\address{\footnotesize{$^1$}Department of Mathematics and Applied Mathematical Sciences and \footnotesize{$^2$}Department of Cell and Molecular Biology, University of Rhode Island, Kingston, RI 02881, USA}

\address{\footnotesize{$^3$}Department of Mathematics and Computer Science, University of the Balearic Islands, 07122, Palma, Spain}

\address{\footnotesize{$^4$}Higher Polytechnic School, University of the Balearic Islands, 07122, Palma, Spain}

\begin{abstract}
Phylogenetic trees are rooted trees with branch lengths that record genetic divergence or elapsed time, and quantifying differences between them is central to a wide range of evolutionary and epidemiological analyses. 
Graph-polynomial encodings of rooted trees provide an accurate, interpretable, and computationally efficient way to compare tree shapes, but existing polynomial encodings must be paired with auxiliary structures 
to study rooted trees with branch lengths.
We introduce a bivariate polynomial encoding that incorporates branch lengths directly into a recursive computation from the leaf vertices to the root vertex of a tree. 
We prove that, for rooted trees with branch lengths and no vertices of degree two, which include all standard phylogenetic trees, two trees have the same polynomial if and only if their underlying unlabeled trees are isomorphic and the branch lengths of corresponding edges are equal. 
We apply the polynomial encoding to three published HIV-1 phylogenies sampled in different epidemiological settings and show that it accurately separates the three datasets based on their tree topologies and branch lengths, outperforming previous polynomial-based approaches for analyzing rooted trees with branch lengths.
\end{abstract}

\end{frontmatter}


\section{Introduction}
\label{S:1}

Rooted trees with branch lengths are central to phylogenetic analysis. 
The leaf vertices of a phylogenetic tree represent the sampled taxa, for example, viral genetic sequences from different infected individuals, or specimens from different species.
The internal vertices represent the inferred common ancestors of the sampled taxa.
Each edge or branch of a phylogenetic tree is labeled with a length that records either the amount of genetic change accumulated along the branch, typically measured in expected nucleotide substitutions per site, or, in time-calibrated phylogenies, the elapsed time between two branching events \cite{felsenstein2004, bouckaert2019}. 
Trees of both types are routinely reconstructed from molecular sequence data and form the basis for downstream inference of evolutionary rates, ancestral states, divergence times, and transmission dynamics in viral epidemics \cite{volz2013}.

An important task in such analyses is to quantify how similar or different two trees are. 
This has motivated a range of tree representations and tree-comparison metrics. 
One recent approach is to encode each tree as a graph polynomial which can be computed recursively from the leaf vertices to the root vertex of a tree, with coefficients of the polynomial summarizing the structural features of the tree \cite{liu2021tree}.
Such polynomial encodings are accurate, in the sense that two trees produce distinct polynomials whenever they are not isomorphic, and therefore no structural information is lost. 
They are interpretable, since each coefficient counts a specific class of subtrees.
The polynomial encodings are also computationally efficient \cite{liu2021tree,liu2022metric}.
These polynomial encodings have been applied to phylogenetic analysis, extracting evolutionary information from different epidemiological or evolutionary processes \cite{liu2022metric}, and to RNA secondary structure analysis, helping identify features of nascent RNA folding that associate with the formation of R-loops, a class of non-canonical nucleic acid structures linked to human diseases including cancer and genetic disorders \cite{liu2024rloop}.

In phylogenetic analysis, the branch lengths of a tree typically carry as much information as the topology -- they directly measure divergence or time, and trees with identical topology but different branch lengths describe different evolutionary scenarios. 
The polynomial encodings developed so far have to pair with a tree lattice for analyzing rooted trees with branch lengths \cite{liu2022metric}. 
Here, we introduce a polynomial encoding for rooted trees that incorporates branch lengths directly into the recursive computation, without the need for additional structures such as the tree lattice. 
We focus on the class of rooted trees with branch lengths that have no vertices of degree two, which include all standard phylogenetic trees, and we call such a rooted tree with branch lengths a {\em branching tree}.
We show that two branching trees have the same polynomial if and only if their underlying unlabeled rooted trees are isomorphic and the corresponding edges have equal branch lengths. 
We then apply the polynomial encoding, together with the Canberra distance \cite{liu2022metric,liu2026comp} between the coefficients, to phylogenetic trees reconstructed from HIV-1 sequences sampled in three different epidemiological settings.
We show that k-medoids clustering \cite{kaufman1990} on the polynomial-based Canberra distance recovers the three source trees at 100\% accuracy when the original branch lengths are retained, and at 99.3\% when the branch lengths are normalized to a common scale.
This polynomial encoding for rooted trees with branch lengths outperforms the previously studied method that involves polynomial encodings and a tree lattice on the same data \cite{liu2022metric}.

\section{Polynomial encoding of rooted trees}
\label{S:2}

\subsection{Background}

A {\em tree} $T$ is a graph without any cycles. 
A tree $T$ is {\em rooted} if a single vertex in $T$ is identified as the {\em root vertex} of tree. 
A vertex in a rooted tree $T$ is a {\em leaf vertex} if and only if the vertex is not the root vertex and has degree 1.
A vertex in a rooted tree $T$ is an {\em elementary vertex} if and only if the vertex is not the root vertex and has degree 2.

Throughout this paper, rooted trees are considered {\em unlabeled}, meaning
that vertices and edges carry no identifying labels unless explicitly stated otherwise. The distinguished root, and the classification of vertices as leaves, internal
vertices, or elementary vertices, are part of the rooted-tree structure and
are not regarded as labels.


A {\em rooted tree with branch lengths} $B$ is a rooted tree such that each edge of $B$ is labeled with a number in $\RR_{>0}$. These lengths are weights to be preserved by isomorphisms, not identifying labels for the edges.

A bivariate polynomial $L(T,x,y)$ is defined for an unlabeled rooted tree $T$ in \cite{liu2021tree}.
The polynomial is defined recursively from leaf vertices to the root vertex of the tree $T$ as follows. 
If a vertex $v$ of $T$ is a leaf vertex, then the polynomial at $v$ is defined to be $L(v,x,y)= x$.
If $v$ is an internal vertex with $k$ child vertices, $\{v_1, v_2, \ldots, v_k\}$, where $k>0$, then, the polynomial at $v$ is given by formula (\ref{polyl}).
\begin{equation}\label{polyl}
    L(v,x,y) = y + \prod_{i = 1}^{k} L(v_i,x,y)
\end{equation}
The polynomial at the root vertex of $T$ is the polynomial for the rooted tree $T$.
It is proved in \cite{liu2021tree} that the polynomial is a complete invariant for unlabeled rooted trees and a generating function for a class of subtrees called primary subtrees. 


\subsection{A polynomial invariant for rooted trees with branch lengths}

To define a polynomial for rooted trees with branch lengths, we first modify the definition of the polynomial $L(T,x,y)$ and define another polynomial $P(T,x,y)$ for an unlabeled rooted tree $T$.
The polynomial $P(T,x,y)$ is also defined recursively from the leaf vertices to the root vertex of $T$. 
If a vertex $v$ of $T$ is a leaf vertex, then the polynomial at $v$ is defined to be $P(v,x,y)= x$.
If $v$ is an internal vertex with $k$ child vertices, $\{v_1, v_2, \ldots, v_k\}$, where $k>0$, then, the polynomial at $v$ is given by formula (\ref{polyp}).
\begin{equation}\label{polyp}
    P(v,x,y) = \prod_{i = 1}^{k} \Bigl(y + P(v_i,x,y)\Bigr)
\end{equation}
The polynomial at the root vertex of $T$ is the polynomial for the rooted tree $T$, and is denoted by $P(T,x,y)$.

The polynomial $P(T,x,y)$ is defined by traversing the edges from leaf vertices to the root vertex. See Figure~\ref{fig1}A for an example.
Whenever an edge is traversed, the variable $y$ is added to the polynomial. 
This allows us to include branch length information to each edge in the tree $T$.

\begin{figure}[ht!]
    \centering
    \begin{tikzpicture}[
  vert/.style       = {circle, fill=black, inner sep=1.5pt},
  edgelabel/.style  = {font=\small\itshape, inner sep=2pt},
  polylabel/.style  = {font=\small},
  panellabel/.style = {font=\small\bf},
  caption/.style    = {font=\small},
]

\begin{scope}[shift={(0,0)}]
  \node[panellabel] at (-6.5, 3) {A};
  \node[vert,
        label={[polylabel] above:{$(x+y)(x+y)\bigl[(x+y)(x+y)+y\bigr]$}}]
       (r) at (0, 2.5) {};
  \node[vert, label={[polylabel] below:{$x$}}]            (l1) at (-4.5,  0.0) {};
  \node[vert, label={[polylabel] below:{$x$}}]            (l2) at (-1.0,  0.0) {};
  \node[vert, label={[polylabel] right:{$(x+y)(x+y)$}}]   (m)  at ( 4.0,  0.0) {};
  \node[vert, label={[polylabel] below:{$x$}}]            (l3) at ( 3.0, -2.0) {};
  \node[vert, label={[polylabel] below:{$x$}}]            (l4) at ( 5.0, -2.0) {};
  \draw (r) -- (l1);
  \draw (r) -- (l2);
  \draw (r) -- (m);
  \draw (m) -- (l3);
  \draw (m) -- (l4);
  \node[edgelabel] at (-2.7,  1.2) {$y$};
  \node[edgelabel] at (-0.7,  1.2) {$y$};
  \node[edgelabel] at ( 1.55,  1.2) {$y$};
  \node[edgelabel] at ( 3.3, -1) {$y$};
  \node[edgelabel] at ( 4.75, -1) {$y$};
  \node[caption] at (-2.4, -2.1) {unlabeled rooted tree $T$};
  \node[caption, anchor=west] at (-5.1, -3.25)
       {$P(T,\,x,\,y) = x^{4} + 4x^{3}y + 6x^{2}y^{2} + 4xy^{3} + y^{4} + x^{2}y + 2xy^{2} + y^{3}$};
\end{scope}

\begin{scope}[shift={(0,-8.0)}]
  \node[panellabel] at (-6.5, 3) {B};
  \node[vert,
        label={[polylabel] above:{$(x+3y)(x+y)\bigl[(x+2y)(x+y)+0.5y\bigr]$}}]
       (r) at (0, 2.5) {};
  \node[vert, label={[polylabel] below:{$x$}}]            (l1) at (-4.5,  0.0) {};
  \node[vert, label={[polylabel] below:{$x$}}]            (l2) at (-1.0,  0.0) {};
  \node[vert, label={[polylabel] right:{$(x+2y)(x+y)$}}]  (m)  at ( 4.0,  0.0) {};
  \node[vert, label={[polylabel] below:{$x$}}]            (l3) at ( 3.0, -2.0) {};
  \node[vert, label={[polylabel] below:{$x$}}]            (l4) at ( 5.0, -2.0) {};
  \draw (r) -- (l1);
  \draw (r) -- (l2);
  \draw (r) -- (m);
  \draw (m) -- (l3);
  \draw (m) -- (l4);
  \node[edgelabel] at (-2.7,  1.2) {$3$};
  \node[edgelabel] at (-0.7,  1.2) {$1$};
  \node[edgelabel] at ( 1.5,  1.2) {$0.5$};
  \node[edgelabel] at ( 3.3, -1) {$2$};
  \node[edgelabel] at ( 4.75, -1) {$1$};
  \node[caption] at (-2.1, -2.1) {rooted tree with branch lengths $B$};
  \node[caption, anchor=west] at (-5.75, -3.25)
       {$P(B,\,x,\,y) = x^{4} + 7x^{3}y + 17x^{2}y^{2} + 17xy^{3} + 6y^{4} + 0.5x^{2}y + 2xy^{2} + 1.5y^{3}$};
\end{scope}

\end{tikzpicture}
    \caption{{\bf Computation of the polynomial.} {A:} An illustration of the recursive computation of the polynomial $P(T,x,y)$ for an unlabeled rooted tree $T$. {B:} An illustration of the recursive computation of the polynomial $P(B,x,y)$ for a rooted tree with branch lengths $B$. Numbers on edges indicate the corresponding branch lengths.}
    \label{fig1}
\end{figure}


\begin{theorem}
    Two unlabeled rooted trees $T_1$ and $T_2$ are isomorphic if and only if the two polynomials $P(T_1,x,y)$ and $P(T_2,x,y)$ are identical.
\end{theorem}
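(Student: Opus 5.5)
The forward direction is immediate. An isomorphism of rooted trees sends each vertex to a vertex with isomorphic subtree. Induction from the leaves then shows that corresponding vertices receive the same polynomial, because formula (\ref{polyp}) is a product and does not depend on the order of the children. So I will concentrate on the converse, and prove by strong induction on the number of vertices that $P(T_1,x,y)=P(T_2,x,y)$ implies $T_1\cong T_2$.

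The key structural facts, each proved by a routine induction, are these. First, $P(T,x,y)$ has nonnegative integer coefficients. Second, $P(T,x,0)=x^{\ell(T)}$, where $\ell(T)$ is the number of leaves. Third, $P(T,x,y)$ is homogeneous-leading, in the sense that its top-degree part is $(x+y)^{\ell}$ times a positive constant; more precisely, every monomial has total degree at most $\ell$. Hence $P(T)=x$ if and only if $T$ is a single vertex. Otherwise the root has $k\ge 1$ children and
\begin{equation*}
P(T,x,y)=\prod_{i=1}^k \bigl(y+P(v_i,x,y)\bigr).
\end{equation*}
I will then use unique factorization in $\mathbb{Z}[x,y]$ (or $\mathbb{Q}[x,y]$). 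The plan is to show that each factor $y+P(v,x,y)$ is \emph{irreducible}, and that $y+P(v)$ determines $P(v)$. Given this, equality of $P(T_1)$ and $P(T_2)$ forces the multisets of factors $\{y+P(v_i)\}$ to coincide up to units. Since the coefficient of $x^{\ell}$ in each factor is $1$, the units are trivial. So the multisets $\{P(v_i)\}$ coincide, and the induction hypothesis matches the child subtrees pairwise isomorphically, which gives $T_1\cong T_2$.

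The main obstacle is the irreducibility of $Q_v=y+P(v,x,y)$. The approach I would try first is the one used for $L$ in \cite{liu2021tree}: view $Q_v$ as a polynomial in $x$ over $\mathbb{Z}[y]$, or substitute to reduce to one variable. If $v$ is a leaf, then $Q_v=x+y$ is linear and hence irreducible. Otherwise $P(v)$ is a product of factors each divisible by... no clean divisibility, so instead I would set $x\mapsto$ and analyze $Q_v$ modulo $y$, where $Q_v\equiv x^{\ell}$, together with its constant term in $x$. At $x=0$, $P(v,0,y)$ is a product of terms $y+P(v_i,0,y)$, which is a monomial $c\,y^{m}$ with $m\ge 2$ whenever $v$ is internal. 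Hence $Q_v(0,y)=y+c y^m$ contains a term that is exactly linear in $y$. If $Q_v=AB$ with $A$ and $B$ nonconstant, then reducing mod $y$ gives $AB\equiv x^\ell$, so $A\equiv x^a$ and $B\equiv x^b$ with $a,b\ge 1$, since both factors have positive $x$-degree because the $x^\ell$ coefficient is $1$ and the total degree is $\ell$. Then $A(0,y)$ and $B(0,y)$ are both divisible by $y$, so $Q_v(0,y)$ is divisible by $y^2$. This contradicts the presence of the $y$ term. The delicate point in this Eisenstein-style argument is justifying that neither factor is a pure polynomial in $y$. I would handle it using the top-degree part: the homogeneous component of degree $\ell$ is $(x+y)^\ell$ times $1$, and a factor lying in $\mathbb{Z}[y]$ would have to divide $(x+y)^{\ell}$, forcing it to be a constant.

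Finally, $P(v)=Q_v-y$ is recovered from $Q_v$ trivially, which closes the induction.
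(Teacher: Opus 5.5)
Your strategy is sound, but it differs from the paper's. The paper proves this theorem in two lines. One checks by induction from the leaves that $P(T,x,y)=L(T,x+y,y)-y$. Since $x\mapsto x+y$ is an invertible substitution, completeness of $P$ is then inherited from the completeness of $L$ proved in \cite{liu2021tree}. You instead give a self-contained argument: irreducibility of each factor $y+P(v,x,y)$ by an Eisenstein-type argument at the prime $y$, then unique factorization to read off the children of the root. This is in effect the unit-weight version of the paper's own Lemma~\ref{lem:irred} and Lemma~\ref{lem:iso}. Your route avoids the external citation and, unlike Lemma~\ref{lem:iso}, handles elementary vertices. The reason is that the weight on every factor is known to be $1$, so $P(v_i)=Q_{v_i}-y$ is recovered without the paper's requirement that $P(v_i)$ have no $ty$ term. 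The paper's route is shorter but rests entirely on \cite{liu2021tree}.

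Two of your supporting claims are false as stated, though neither is essential.

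First, $P(v,0,y)$ need not be a monomial $cy^m$ with $m\ge 2$. If $v$ has a single leaf child, $P(v,0,y)=y$; if $v$ has a single child with two leaf children, $P(v,0,y)=y+y^2$. What you actually need is that the coefficient of $y$ in $Q_v(0,y)$ is nonzero. That coefficient equals $1+c$, where $c\ge 0$ is the coefficient of $x^0y^1$ in $P(v,x,y)$, so it is positive by your nonnegativity fact. This is exactly the $(w+w')y$ step in Lemma~\ref{lem:irred}.

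Second, the top-degree part of $P(T)$ is not $(x+y)^\ell$ up to a constant. It is $x$ for a single vertex, $x+2y$ for a path with two edges, and $(x+y)(x+2y)$ for a root with one leaf child and one elementary child above a leaf. You do not need this claim. Each $y+P(v_i)$ is monic in $x$ because $\ell(v_i)\ge 1$, so $Q_v$ is monic of degree $\ell$ in $x$ over $\mathbb{Q}[y]$. In any factorization $Q_v=AB$, the $x$-leading coefficients of $A$ and $B$ are therefore nonzero constants. Hence a factor lying in $\mathbb{Q}[y]$ is a unit, and reduction mod $y$ preserves $x$-degrees. Your earlier observation that total degree and $x$-degree both equal $\ell$ gives the same conclusion.

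Likewise, $P(T)=x$ forces $T$ to be a single vertex, because otherwise $P(T,0,y)=\prod_i\bigl(y+P(v_i,0,y)\bigr)\neq 0$. With these repairs the argument is complete.
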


\begin{proof}
    It is trivial to check that the following equation (\ref{variablechange}) holds for any unlabeled rooted tree $T$. 
    \begin{equation}\label{variablechange}
    P(T,x,y) = L(T, x+y, y) - y
    \end{equation}
    So, the polynomial $P(T,x,y)$ being a complete invariant for unlabeled rooted trees directly follows that the polynomial $L(T,x,y)$ being a complete invariant for unlabeled rooted trees \cite{liu2021tree}. 
\end{proof}

Here, we generalize the polynomial to encode rooted trees with branch lengths including the branch length of each edge in a tree.
To distinguish from unlabeled rooted trees, we use $B$ to denote a rooted tree with branch lengths. 
For a rooted tree with branch lengths $B$, we define the polynomial $P(B,x,y)$ recursively from the leaf vertices of $B$ to the root vertex of $B$ as follows.
If a vertex $v$ of $B$ is a leaf vertex, then the polynomial at $v$ is defined to be $P(v,x,y)= x$.
If $v$ is an internal vertex with $k$ child vertices, $\{v_1, v_2,\ldots, v_k\}$, with branch length $w_i$ on the edge $e_i=vv_i$, then, the polynomial at $v$ is given by formula (\ref{polypw}).
\begin{equation}\label{polypw}
    P(v,x,y) = \prod_{i = 1}^{k} \Bigl(w_iy + P(v_i,x,y)\Bigr)
\end{equation}
The polynomial at the root vertex of $B$ is the polynomial for the rooted tree $B$. 
See Figure~\ref{fig1}B for an example.
Without ambiguity, we denote $P(B,x,y)$ by $P(B)$.


Two rooted trees with branch lengths $B_1$ and $B_2$ are {\em isomorphic} if and only if the underlying unlabeled rooted trees of $B_1$ and $B_2$ are isomorphic and the corresponding edges in $B_1$ and $B_2$ have equal branch lengths. 
It is trivial that the polynomial $P(B,x,y)$ is an invariant of rooted trees with branch lengths. 
\begin{theorem}\label{onlyif}
    If two rooted trees with branch lengths $B_1$ and $B_2$ are isomorphic, then the two polynomials $P(B_1,x,y)$ and $P(B_2,x,y)$ are identical.
\end{theorem}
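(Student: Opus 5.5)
The plan is to prove Theorem~\ref{onlyif} by strong induction on the height of the root, showing that an isomorphism of rooted trees with branch lengths preserves the polynomial computed at every vertex. Fix an isomorphism $\phi\colon B_1\to B_2$. By definition, $\phi$ is an isomorphism of the underlying unlabeled rooted trees that sends the root to the root and preserves branch lengths. Since it preserves the root, it also preserves the parent--child relation. For each vertex $v$ of $B_1$, let $B_1(v)$ denote the subtree rooted at $v$, consisting of $v$ and its descendants with their branch lengths. Then $\phi$ restricts to an isomorphism $B_1(v)\to B_2(\phi(v))$. The claim we will prove is that $P(v,x,y)=P(\phi(v),x,y)$ for every vertex $v$ of $B_1$. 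Applying this to the root gives $P(B_1)=P(B_2)$.

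The induction runs on the height of $v$, meaning the length of the longest downward path from $v$ to a leaf.

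\begin{itemize}
\item \emph{Base case (height $0$).} Here $v$ is a leaf. Since $\phi$ preserves degrees and non-rootness, $\phi(v)$ is also a leaf, so both polynomials equal $x$. The degenerate case where the whole tree is a single root vertex has to be handled directly: both trees are then single vertices and have the same polynomial.
\item \emph{Inductive step.} Let $v$ be an internal vertex with children $v_1,\dots,v_k$ and edge lengths $w_i$ on $vv_i$. Because $\phi$ preserves the parent--child relation, $\phi$ maps $\{v_1,\dots,v_k\}$ bijectively onto the set of children of $\phi(v)$. Because $\phi$ preserves branch lengths, the edge $\phi(v)\phi(v_i)$ has length $w_i$. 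Each $v_i$ has smaller height than $v$, so the induction hypothesis gives $P(v_i)=P(\phi(v_i))$. Substituting into formula~(\ref{polypw}), the factors $w_iy+P(v_i)$ for $v$ coincide one-for-one with the factors for $\phi(v)$, listed in the order $\phi(v_1),\dots,\phi(v_k)$.
\end{itemize}

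The only point needing care is that the product in~(\ref{polypw}) does not depend on how the children are ordered. This holds because multiplication in $\RR[x,y]$ is commutative, so reindexing the children by $\phi$ leaves the product unchanged.

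I expect no real obstacle. The mildest subtlety is being explicit that a rooted isomorphism sends children of $v$ to children of $\phi(v)$, and therefore matches the edges $vv_i$ with the edges $\phi(v)\phi(v_i)$ together with their lengths. This follows because root-preservation fixes the depth of every vertex, and in a tree the unique path to the root determines the parent of each vertex.
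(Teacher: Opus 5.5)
Your proof is correct. The paper gives no proof of this statement beyond remarking that invariance of $P(B,x,y)$ is trivial, and your induction on vertex height through the recursive formula~(\ref{polypw}) is exactly the routine verification the paper leaves implicit: a root-preserving isomorphism matches the children of $v$ and their edge lengths with those of $\phi(v)$, and the product does not depend on the order of the children.
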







\section{Interpretation of the polynomial}%
\label{sec:interpretation}



Given $k>0$ rooted trees with branch lengths $B_1, B_2, \ldots,B_k$ with roots $r_1, r_2,\ldots,r_k$, respectively, the {\em wedge product} of the $k$ trees is a rooted tree with branch lengths $B$ obtained by connecting the $k$ roots $r_1, r_2,\ldots,r_k$ to a common root $r$ of $B$ with $k$ edges $rr_1, rr_2,\ldots,rr_k$, respectively.
Let $w_1,w_2,\ldots,w_k$ be the branch lengths of the $k$ edges, respectively.
We denote the wedge product by $B = \wedge_{i=1}^k w_iB_i$.
Note that for any rooted tree with branch lengths $B$ with at least an edge, $B$ is of form $B = \wedge_{i=1}^k w_iB_i$.
Moreover, $P(B,x,y)=\prod_{i=1}^k(w_iy+p(B_i))$.



Let $B$ be a rooted tree with branch lengths. 
A subtree $S$ of $B$ is a {\em principal subtree} of $B$ if and only if the root vertex of $S$ is the root vertex of $B$.
{\em Pruning} an edge $e$ from $B$ means removing the edge $e$ and all its descendant vertices and edges from $B$.
Two edges $e_1$ and $e_2$ of $B$ are {\em incomparable} if and only if $e_1$ is not a descendant of $e_2$ and $e_2$ is not a descendant of $e_1$.
It is trivial that a principal subtree $S$ of $B$ can be obtained by pruning a set of pairwise incomparable edges from $B$. See Figure~\ref{fig2} for an example.

We denote $\cS{B}$ the set of principal subtrees of $B$.
Let $S\in\cS{B}$ be a principal subtree obtained by pruning the set of pairwise incomparable edges $\{e_1,e_2,\ldots,e_\beta \}$ of $B$. 
We assign to $S$ the monomial $q(S)=wx^\alpha y^\beta$, where $\alpha$ is the number of leaf vertices of $B$ that are in $S$, and $w$ is the product of the branch lengths of $e_1,e_2\ldots,e_\beta$.
See Figure~\ref{fig2} for examples. 

\begin{figure}[ht!]
    \centering
\begin{tikzpicture}[
  vert/.style      = {circle, fill=black, inner sep=1.3pt},
  edgelabel/.style = {font=\small\itshape, inner sep=2pt},
  caption/.style   = {font=\small},
  solidedge/.style = {thick},
  dashedge/.style  = {dashed},
]
\begin{scope}[shift={(0,0)}]
  \node[vert]  (r)  at (0,2)      {};
  \coordinate  (l1) at (-1.8,0);
  \coordinate  (l2) at (-0.4,0);
  \node[vert]  (m)  at (1.8,0)    {};
  \node[vert]  (l3) at (1.2,-1.5) {};
  \node[vert]  (l4) at (2.4,-1.5) {};
  \draw[dashedge] (r) -- (l1);
  \draw[dashedge] (r) -- (l2);
  \draw[solidedge] (r) -- (m);
  \draw[solidedge] (m) -- (l3);
  \draw[solidedge] (m) -- (l4);
  \node[edgelabel] at (-1.2, 1.1) {$3$};
  \node[edgelabel] at (-0.4, 1.1) {$1$};
  \node[caption] at (0.3,-2.3) {$q(S) = 3x^2y^2$};
\end{scope}
\begin{scope}[shift={(7,0)}]
  \node[vert]  (r)  at (0,2)      {};
  \coordinate  (l1) at (-1.8,0);
  \node[vert]  (l2) at (-0.4,0)   {};
  \node[vert]  (m)  at (1.8,0)    {};
  \coordinate  (l3) at (1.2,-1.5);
  \node[vert]  (l4) at (2.4,-1.5) {};
  \draw[dashedge] (r) -- (l1);
  \draw[solidedge] (r) -- (l2);
  \draw[solidedge] (r) -- (m);
  \draw[dashedge] (m) -- (l3);
  \draw[solidedge] (m) -- (l4);
  \node[edgelabel] at (-1.2, 1.1) {$3$};
  \node[edgelabel] at ( 1.3,-0.6) {$2$};
  \node[caption] at (0.3,-2.3) {$q(S) = 6x^2y^2$};
\end{scope}
\begin{scope}[shift={(0,-6)}]
  \node[vert]  (r)  at (0,2)      {};
  \coordinate  (l1) at (-1.8,0);
  \node[vert]  (l2) at (-0.4,0)   {};
  \node[vert]  (m)  at (1.8,0)    {};
  \node[vert]  (l3) at (1.2,-1.5) {};
  \coordinate  (l4) at (2.4,-1.5);
  \draw[dashedge] (r) -- (l1);
  \draw[solidedge] (r) -- (l2);
  \draw[solidedge] (r) -- (m);
  \draw[solidedge] (m) -- (l3);
  \draw[dashedge] (m) -- (l4);
  \node[edgelabel] at (-1.2, 1.1) {$3$};
  \node[edgelabel] at ( 2.3,-0.6) {$1$};
  \node[caption] at (0.3,-2.3) {$q(S) = 3x^2y^2$};
\end{scope}
\begin{scope}[shift={(7,-6)}]
  \node[vert]  (r)  at (0,2)      {};
  \node[vert]  (l1) at (-1.8,0)   {};
  \coordinate  (l2) at (-0.4,0);
  \node[vert]  (m)  at (1.8,0)    {};
  \coordinate  (l3) at (1.2,-1.5);
  \node[vert]  (l4) at (2.4,-1.5) {};
  \draw[solidedge] (r) -- (l1);
  \draw[dashedge] (r) -- (l2);
  \draw[solidedge] (r) -- (m);
  \draw[dashedge] (m) -- (l3);
  \draw[solidedge] (m) -- (l4);
  \node[edgelabel] at (-0.4, 1.1) {$1$};
  \node[edgelabel] at ( 1.3,-0.6) {$2$};
  \node[caption] at (0.3,-2.3) {$q(S) = 2x^2y^2$};
\end{scope}
\begin{scope}[shift={(0,-12)}]
  \node[vert]  (r)  at (0,2)      {};
  \node[vert]  (l1) at (-1.8,0)   {};
  \coordinate  (l2) at (-0.4,0);
  \node[vert]  (m)  at (1.8,0)    {};
  \node[vert]  (l3) at (1.2,-1.5) {};
  \coordinate  (l4) at (2.4,-1.5);
  \draw[solidedge] (r) -- (l1);
  \draw[dashedge] (r) -- (l2);
  \draw[solidedge] (r) -- (m);
  \draw[solidedge] (m) -- (l3);
  \draw[dashedge] (m) -- (l4);
  \node[edgelabel] at (-0.4, 1.1) {$1$};
  \node[edgelabel] at ( 2.3,-0.6) {$1$};
  \node[caption] at (0.3,-2.3) {$q(S) = x^2y^2$};
\end{scope}
\begin{scope}[shift={(7,-12)}]
  \node[vert]  (r)  at (0,2)      {};
  \node[vert]  (l1) at (-1.8,0)   {};
  \node[vert]  (l2) at (-0.4,0)   {};
  \node[vert]  (m)  at (1.8,0)    {};
  \coordinate  (l3) at (1.2,-1.5);
  \coordinate  (l4) at (2.4,-1.5);
  \draw[solidedge] (r) -- (l1);
  \draw[solidedge] (r) -- (l2);
  \draw[solidedge] (r) -- (m);
  \draw[dashedge] (m) -- (l3);
  \draw[dashedge] (m) -- (l4);
  \node[edgelabel] at ( 1.3,-0.6) {$2$};
  \node[edgelabel] at ( 2.3,-0.6) {$1$};
  \node[caption] at (0.3,-2.3) {$q(S) = 2x^2y^2$};
\end{scope}
\end{tikzpicture}
    \caption{{\bf Principal subtrees and monomials.} Principal subtrees of the rooted tree with branch lengths $B$ in Figure~\ref{fig1}B associated with the term $17x^2y^2$ in $P(B,x,y)$. Edges with dashed lines are not in the principal subtrees. They indicate edges pruned to construct the principal subtrees. The coefficient for each monomial is the product of the product of pruned edges' branch lengths. All six coefficient sum up to the coefficient $17$ of the term $17x^2y^2$ in $P(B,x,y)$.}
    \label{fig2}
\end{figure}

\begin{theorem}
  For a rooted tree with branch lengths $B$ be, $P(B,x,y)=\sum_{S\in\cS{B}}q(S)$.
\end{theorem}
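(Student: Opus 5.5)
We argue by induction on the number of edges of $B$, using the wedge-product decomposition $B=\wedge_{i=1}^k w_iB_i$ and the factorization $P(B,x,y)=\prod_{i=1}^k\bigl(w_iy+P(B_i,x,y)\bigr)$.

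\emph{Base case.} If $B$ has no edges, it is a single vertex. We adopt the convention that this vertex counts as a leaf, consistent with $P(v,x,y)=x$. The only principal subtree is $B$ itself, obtained by pruning the empty set of edges. Its monomial is $q(B)=x^1y^0=x$, so the two sides agree.

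\emph{Inductive step: decomposing principal subtrees.} Write $B=\wedge_{i=1}^k w_iB_i$, with root $r$ joined to the roots $r_i$ by edges $e_i$. A principal subtree $S$ of $B$ is determined by a set of pairwise incomparable pruned edges. For each $i$, the pruned edges lying in the branch $e_i\cup B_i$ fall into exactly one of two cases:
\begin{itemize}
\item[(a)] the single edge $e_i$, since any other pruned edge in that branch would be a descendant of $e_i$ and so comparable to it;
\item[(b)] a set of pairwise incomparable edges inside $B_i$, which determines a principal subtree $S_i\in\cS{B_i}$.
\end{itemize}
Pruned edges in different branches are automatically incomparable. Hence principal subtrees of $B$ correspond bijectively to choices, for each $i$, of either ``prune $e_i$'' or some $S_i\in\cS{B_i}$.

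\emph{Multiplicativity of $q$.} We must check that $q(S)$ factors over the branches.
\begin{itemize}
\item A branch of type (a) contributes a factor $w_i$ to the coefficient, $1$ to the exponent of $y$, and $0$ to the exponent of $x$, since no leaf of $B_i$ survives. This gives the factor $w_iy$.
\item A branch of type (b) contributes exactly $q(S_i)$: the leaves of $B$ inside $B_i$ are precisely the leaves of $B_i$, and the pruned edges and their lengths are the same.
\end{itemize}
So $q(S)$ is the product of these per-branch factors. Summing over all choices and using the distributive law,
\[
\sum_{S\in\cS{B}}q(S)=\prod_{i=1}^k\Bigl(w_iy+\sum_{S_i\in\cS{B_i}}q(S_i)\Bigr).
\]
By the induction hypothesis each inner sum equals $P(B_i,x,y)$, so the right-hand side is $P(B,x,y)$.

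\emph{Main obstacle.} The delicate point is the leaf bookkeeping in (b). If $B_i$ is a single vertex $r_i$, then $r_i$ is a leaf of $B$, and not pruning $e_i$ must contribute $x$. This matches $q(S_i)=x$ for the one-vertex tree under the base-case convention. We will also check that a principal subtree is never required to contain its own leaves in any other sense: $\alpha$ counts leaves of $B$, not leaves of $S$. With these checks the bijection and the factorization go through.
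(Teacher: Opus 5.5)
Your proof is correct and follows the paper's argument: you decompose $B=\wedge_{i=1}^k w_iB_i$, match principal subtrees of $B$ with per-branch choices of either pruning $e_i$ (giving the factor $w_iy$) or choosing some $S_i\in\cS{B_i}$, and apply the distributive law to $\prod_{i=1}^k\bigl(w_iy+P(B_i,x,y)\bigr)$. Inducting on the number of edges rather than on depth, as the paper does, makes no difference, and your explicit convention that the vertex of a one-vertex tree counts as a leaf is the same convention the paper uses implicitly when it sets $q(B)=x$ in its base case.
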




\begin{proof}
  We proceed by strong induction on $d$, the depth of $B$.

  If $d=0$, then $B$ consists on a single vertex, so $P(B,x,y)=x$. Note that in this
  case $\cS{B}=\{B\}$ and $q(B)=x$, hence $P(B,x,y)=\sum_{S\in\cS{B}}q(S)$.

  Assume that $P(B,x,y)=\sum_{S\in\cS{B}}q(S)$ for any rooted tree with brach lengths $B$ that has depth $d\le D$. 
  Suppose that $B'$ has depth $d=D+1$, and let $B_1,B_2,\ldots,B_k$ be trees such
  that $B'=\wedge_{i=1}^kw_iB_i$, where $k>0$ and $w_1,w_2,\ldots,w_n\in \RR_{>0}$. 

  Notice that the depth of the trees $B_i$ is at
  most $D$. Thus, by induction hypothesis, we have the following formula (\ref{formqs}).
  \begin{align}\label{formqs}
      P(B')=\prod_{i=1}^k (w_iy+P(B_i)) =\prod_{i=1}^k\Bigl(w_iy+\sum_{S\in\cS{B_i}}q(S)\Bigr)
  \end{align}
  Note that for any principal subtree $S'$ of $B'$, $S'$ is a wedge product $S'=\wedge_{i\in A} S_i$ for some $A\subseteq [k] = \{1,2,\ldots,k\}$ and some $S_i\in \cS{B_i}$ for every $i \in A$. 
  The set $A$ can be empty, and in this case, $S'$ is the principal subtree comprising only the root vertex of $B'$.
  The distributive law implies the following formula (\ref{distp}).
  \begin{align}\label{distp}
P(B')=\prod_{i=1}^k\Bigl(w_iy+\sum_{S\in\cS{B_i}}q(S)\Bigr) = 
      \sum_{A\subseteq[k]} \Bigl(\prod_{i\in [k]\setminus A}w_iy\Bigr) \Bigl(\prod_{i\in A}\sum_{S_i\in\cS{B_i}}q(S_i)\Bigr)
  \end{align}

We claim that there is a one-to-one correspondence between the set of monomials $\{q(S') \, | \, S'\in \cS{B'}\}$ and the set of monomials in the above formula (\ref{distp}), which have the form $(\prod_{i\in [k]\setminus A}w_iy) (\prod_{i\in A}\sum_{S_i\in\cS{B_i}}q(S_i))$. 
This is straightforward as any principal subtree $S'=\wedge_{i\in A} S_i$ of $B'$ is constructed by pruning the edges in each $S_i$ as well as the edge connecting the root of $B'$ and the root of $B_i$ for each $i\in [k]\setminus A$.  

Therefore, for any rooted tree with branch lengths, $P(B,x,y)=\sum_{S\in\cS{B}}q(S)$.
\end{proof}

Let $B$ be a rooted tree with branch lengths that has $n$ leaf vertices. 
Notice that $B\in\cS{B}$, $q(B)=x^n$, and $B$
is the unique principal subtree of $B$ that has the $n$ leaf vertices of $B$. 
Then, the leading term of $P(B)$ in $(\RR[y])[x]$ is $x^n$. 
Moreover, $y$ divides
every other term in $P(B)$, since all the trees in $\cS{B}\setminus\{B\}$ are
obtained by pruning at least one edge of $B$.

\section{Complete polynomial invariant for branching trees}%
\label{sec:invariant}

\begin{lemma}\label{lem:irred}
  For any rooted tree with branch lengths $B$, $wy+P(B,x,y)$ is irreducible in $\RR[x,y]$ for
  any $w\in\RR_{>0}$.
\end{lemma}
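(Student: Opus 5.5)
The plan is to exploit two structural facts about $Q(x,y):=wy+P(B,x,y)$ that follow from the interpretation in Section~\ref{sec:interpretation}. First, $Q$ is monic of degree $n$ in $x$ over $\RR[y]$, and $Q(x,0)=x^n$. Second, the coefficient of $y$ in $Q(0,y)$ is strictly positive. Then I will show that these two facts are incompatible with any nontrivial factorization.

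\textbf{Step 1: properties of $Q$.} Let $n\ge 1$ be the number of leaves of $B$. By the theorem $P(B,x,y)=\sum_{S\in\cS{B}}q(S)$ and the remark following it, $P(B)$ has leading term $x^n$ in $(\RR[y])[x]$, and $y$ divides every other term. Hence $Q$ is monic in $x$ of degree $n$, and $Q(x,0)=x^n$.

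Next I compute the coefficient of $x^0y^1$ in $P(B)$. A principal subtree contributing $x^0y^1$ must contain no leaf of $B$ and be obtained by pruning exactly one edge. This forces the subtree to be the root alone, with the root having exactly one child. So this coefficient is either $0$ or the positive branch length $w_1$ of the unique root edge. All branch lengths are positive, so in either case the coefficient of $y$ in $Q(0,y)$ is $w$ or $w+w_1$, which is strictly positive. In particular $y^2\nmid Q(0,y)$. If $B$ is a single vertex, then $Q=x+wy$ is linear and therefore irreducible, so I may assume $B$ has an edge, although the argument does not really need this.

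\textbf{Step 2: reduce to a factorization into $x$-monic factors.} Suppose $Q=FG$ with $F,G$ nonconstant. Because the leading coefficient of $Q$ in $x$ is $1$, the leading coefficients of $F$ and $G$ in $x$ multiply to $1$ in $\RR[y]$. They are therefore nonzero constants, and after rescaling, $F$ and $G$ are monic in $x$. Neither factor can have $x$-degree $0$: such a factor would be a constant, contradicting nontriviality. So $\deg_x F=a\ge1$ and $\deg_x G=b\ge1$.

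\textbf{Step 3: specialize $y=0$, then $x=0$.} Setting $y=0$ gives $F(x,0)G(x,0)=x^n$ in $\RR[x]$. By unique factorization, and since both factors are monic of degrees $a$ and $b$, we get $F(x,0)=x^a$ and $G(x,0)=x^b$. Consequently $F=x^a+yF_1$ and $G=x^b+yG_1$ for some polynomials $F_1,G_1$. Setting $x=0$ now gives $F(0,y)=yF_1(0,y)$ and $G(0,y)=yG_1(0,y)$, because $a,b\ge1$. Hence $y^2\mid F(0,y)G(0,y)=Q(0,y)$, which contradicts Step 1.

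The only step needing real care is the coefficient computation in Step 1. I must correctly identify which principal subtrees yield $x^0y^1$, allowing elementary vertices since the lemma is stated for arbitrary $B$. I must also use positivity of the branch lengths so that no cancellation with $wy$ can occur. The rest is routine bookkeeping with monic polynomials over $\RR[y]$.
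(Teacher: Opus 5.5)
Your argument is correct and is essentially the paper's proof. The paper applies Eisenstein's criterion to $wy+P(B)$ in $(\RR[y])[x]$ at the prime $y$: the polynomial is monic in $x$, $y$ divides every lower coefficient, and the $x$-free coefficient contains $(w+w')y$ with $w'\ge 0$, so it is not in $\langle y\rangle^2$. Your Steps 2--3 are simply the standard proof of Eisenstein's criterion carried out in this instance.

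There is one small slip in Step 1, which is harmless. The claim that only the root alone yields $x^0y^1$ fails when the root's only child is an elementary vertex. For example, in the path $r$--$c$--$d$ one has $P=x+(w_{rc}+w_{cd})y$, because pruning the edge $cd$ also gives a principal subtree with no leaf of $B$ and one pruned edge. In general the $x^0y^1$ coefficient of $P(B)$ is the total branch length of the unbranched stem below the root, not just $w_1$. You only need that coefficient to be $\ge 0$, and this holds since every $q(S)$ has a positive coefficient.
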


\begin{proof}
  We will use the Eisenstein criterion to prove the result. 
  Let $n$ be the number of leaf vertices of $B$, and $I=\langle y\rangle$ be a (prime) ideal in $\RR[y]$. 
  From the previous section, the leading of $wy+P(B)$ in $(\RR[y])[x]$ is $x^n$, with
  coefficient 1, and $1\notin I$. Moreover, $y$ divides any other term
  of $wy+P(B)$, so the coefficient of every term of $wy+P(B)-x^n$ is in $I$.
  Also, as $w>0$ and $P(B)$ contains a term of the form $w'y$ for
  some $w'\ge 0$, we have that the independent term of $P(B)$ in $(\RR[y])[x]$
  contains $(w+w')y$ with $w+w'\in\RR_{>0}$, hence $(w+w')y\notin I^2$. Therefore,
  by Eisenstein criterion, $P(B)$ is irreducible in $(\RR[y])[x]=\RR[x,y]$.
\end{proof}


A rooted tree with branch lengths $B$ is a {\em branching tree} if and only if $B$ has no elementary vertices. 
In this section, we show that the polynomial $P(B,x,y)$ is a complete invariant for branching trees.
A branching tree $B$ is {\em proper} if and only if the degree of its root vertex is greater than one.

\begin{lemma}\label{lem:iso}
  For two proper branching trees $B$ and $B'$, if $P(B,x,y)=P(B',x,y)$, then $B$ and $B'$ are isomorphic. 
\end{lemma}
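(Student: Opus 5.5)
We argue by strong induction on the number of vertices (equivalently, on the total degree of $P(B)$, which determines the leaf count; depth would serve equally well). Write $B=\wedge_{i=1}^k w_iB_i$ and $B'=\wedge_{j=1}^{m} w'_jB'_j$ with $k,m\ge 2$, since both trees are proper. Then
\[
\prod_{i=1}^k\bigl(w_iy+P(B_i)\bigr)=P(B)=P(B')=\prod_{j=1}^m\bigl(w'_jy+P(B'_j)\bigr).
\]
By Lemma~\ref{lem:irred} every factor on both sides is irreducible in $\RR[x,y]$. Each factor is also monic in $x$, with leading term $x^{n_i}$, so no unit rescaling is possible. Since $\RR[x,y]$ is a unique factorization domain, $k=m$ and, after reindexing, $w_iy+P(B_i)=w'_iy+P(B'_i)$ for each $i$.

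Next we separate the branch length from the subtree polynomial. We isolate the coefficient of the monomial $y$ (the term $x^0y^1$) in $w_iy+P(B_i)$. If $B_i$ is a single vertex then $P(B_i)=x$, which has no $y$-term. If $B_i=\wedge_{l} u_lC_l$ has $r\ge 1$ children, then $P(B_i)=\prod_{l=1}^{r}(u_ly+P(C_l))$ and every factor is divisible by $y$ at $x=0$. This is because $P(C_l)(0,y)$ is divisible by $y$, as all terms other than $x^{n}$ contain $y$. Hence $P(B_i)(0,y)$ is divisible by $y^r$.

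The key point is that the hypothesis of no elementary vertices forces $r\ge 2$ for every internal non-root vertex. So $P(B_i)$ has no $x^0y^1$ term, and the coefficient of $y$ in $w_iy+P(B_i)$ is exactly $w_i$. This is precisely where the branching hypothesis is needed and what I expect to be the main subtlety. With an elementary vertex, a degree-one child would contribute a $u y$ term, and one could trade branch length between consecutive edges.

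Therefore $w_i=w'_i$, and subtracting gives $P(B_i)=P(B'_i)$. Each $B_i$ is either a single vertex or itself a proper branching tree, since its root has at least two children. The same holds for $B'_i$. The two are distinguished by the $x$-degree: a single vertex has $P=x$, while a proper tree with at least two leaves has degree at least $2$. So either both are single vertices, or both are proper branching trees with fewer vertices than $B$, and the induction hypothesis gives $B_i\cong B'_i$.

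To finish, we glue the isomorphisms $B_i\cong B'_i$ together with the matched edges of equal lengths $w_i=w'_i$ at the roots, which yields $B\cong B'$. The base case, where $B$ is a single vertex, is immediate.
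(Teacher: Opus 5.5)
Your proof is correct and follows the paper's argument essentially step for step: induction, matching the monic irreducible factors $w_iy+P(B_i)$ and $w'_iy+P(B'_i)$ via Lemma~\ref{lem:irred} and unique factorization in $\RR[x,y]$, reading off $w_i=w'_i$ as the coefficient of $x^0y^1$ (which $P(B_i)$ lacks because no non-root internal vertex has a single child), and then applying the induction hypothesis to the child subtrees. You justify two points the paper only asserts, namely that $P(B_i)(0,y)$ is divisible by $y^2$ and that each $B_i$ is either a single vertex or proper (the paper loosely calls every $B_i$ proper); the only nit is that $B'_i$ need not a priori have fewer vertices than $B$, so either phrase the induction hypothesis for $B$ alone or induct on the leaf count, which $B$ and $B'$ share, as the paper does.
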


\begin{proof}
  We prove the lemma by strong induction on $n$, the degree on $x$ in the polynomials, which corresponds to the number
  of leaf vertices of the trees.

  If $n=1$, then the proper branching trees $B$ and $B'$ must consist only a single vertex.
  Thus, $P(B)=P(B')=x$ and $B$, $B'$ are isomorphic.

  Assume that the statement of the lemma is true for all pairs of proper branching trees with $n\le N$ leaf vertices.

  Now, suppose that $n=N+1$. 
  Let $B=\wedge_{i=1}^k w_iB_i$ and $B'=\wedge_{i=1}^l w'_iB'_i$ for $k,l>1$.
  We have $P(B)=\prod_{i=1}^k(w_iy+P(B_i))$
  and $P(B')=\prod_{i=1}^l(w'_iy+P(B'_i))$.
  As $P(B)=P(B')$, $\RR[x,y]$ is a unique factorization domain,
  and $w_iy+P(B_i)$ and $w'_iy+P(B'_i)$ are irreducible by Lemma \ref{lem:irred}, we have $k=l$ and, without loss of generality, $w_iy+P(B_i)=\alpha_i(w'_iy+P(B'_i))$ for
  some $\alpha_i\in\RR\setminus\{0\}$ 
  for all $i=1,2,\ldots,k$. 
  As both $P(B_i)$ and $P(B'_i)$ contain the term $x^m$ (with coefficient 1, where $m< N+1$) or all $i=1,2,\ldots,k$, we have that $\alpha_i=1$ for all $i=1,2,\ldots,k$.
  Since $B$ and $B'$ are proper branching trees, $B_i$ and $B'_i$ are also proper branching trees for all $i=1,2,\ldots,k$.
  So, none of the polynomials $P(B_i)$ or $P(B'_i)$ contains a term $ty$ for any $t\in\RR_{>0}$.
  Therefore, $w_i=w'_i$ and $P(B_i)=P(B'_i)$ for all $i=1,2,\ldots,k$.
  Now, as all trees $B_i$ and $B'_i$ have less than $n$ leaves, by the hypothesis, $B_i$ and $B'_i$ are isomorphic for all $i=1,2,\ldots,k$.
  Finally, as $B=\wedge_{i=1}^k w_iB_i$ and $B'=\wedge_{i=1}^k w'_iB'_i$ and $w_i=w'_i$ for all $i=1,2,\ldots,k$, the two proper branching trees $B$ and $B'$ are isomorphic.
\end{proof}

\begin{theorem}
    Two branching trees $B_1$ and $B_2$ are isomorphic if and only if the polynomials $P(B_1,x,y)$ and $P(B_2,x,y)$ are identical.
\end{theorem}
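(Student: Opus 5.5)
The forward direction is exactly Theorem~\ref{onlyif}, so the work is in the converse: assuming $P(B_1)=P(B_2)$, show $B_1\cong B_2$. Lemma~\ref{lem:iso} already handles the case where both trees are proper. So the plan is to reduce to that lemma by distinguishing proper from non-proper branching trees using only the polynomial.

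First, note that a branching tree is either a single vertex (proper, with polynomial $x$) or has root degree at least one. If its root degree is exactly one, then $B=w_1B_1$ is a wedge product with one factor. Because $B$ has no elementary vertices, the child $B_1$ is either a single vertex or has root degree at least two, so $B_1$ is a proper branching tree. Hence every branching tree is either proper, or of the form $B=w_1B_1$ with $B_1$ proper, and in the latter case $P(B)=w_1y+P(B_1)$.

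The key step is to read off from $P(B)$ which case we are in. By the principal-subtree interpretation, the coefficient of the monomial $y$ (that is, $x^0y^1$) in $P(B)$ is the sum of $q(S)$ over principal subtrees obtained by pruning exactly one edge so that no leaf of $B$ survives. Pruning a single edge removes all leaves only if that edge is the unique edge at the root. So a proper tree with at least one edge has zero coefficient of $y$, a single vertex likewise has zero, and a root-degree-one tree has coefficient exactly $w_1>0$. The same conclusion also follows directly from the product formula: for root degree $k\ge 2$, every term of $\prod_i(w_iy+P(B_i))$ has total $y$-degree plus $x$-degree at least $2$.

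The argument then concludes as follows. If $P(B_1)=P(B_2)$, the coefficient of $y$ agrees. If it is zero, both trees are proper and Lemma~\ref{lem:iso} gives the isomorphism. If it equals $w>0$, both trees have root degree one with the same root edge length $w$, say $B_1=wC_1$ and $B_2=wC_2$ with $C_1,C_2$ proper. Subtracting $wy$ gives $P(C_1)=P(C_2)$, so $C_1\cong C_2$ by Lemma~\ref{lem:iso}, and therefore $B_1\cong B_2$.

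The main obstacle is the careful justification that the $y$-coefficient vanishes for proper trees (including the single-vertex tree) and equals the root edge length otherwise. This is also implicitly what the proof of Lemma~\ref{lem:iso} uses. I would prove it via the low-degree argument on the product expansion.
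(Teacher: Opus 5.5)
Your proof is correct and follows the paper's own argument. Like the paper, you use the coefficient of $y$ to detect whether the root has degree one, then reduce to Lemma~\ref{lem:iso}, peeling off the term $wy$ in the degree-one case. Your low-degree argument on the product expansion also supplies a justification for this detection step, which the paper only asserts with ``Notice that''.
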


\begin{proof}
  The ``only if'' part directly follows Theorem~\ref{onlyif}.

  Notice that for a branching tree $B$, $P(B)$ contains a term $wy$ for some $w\in\RR_{>0}$ if and only if the root of $B$ has degree one, and $w$ is the branch length of the unique edge incident to the root of $B$.
  Suppose that $P(B_1)=P(B_2)$.
  Then, either both root vertices of $B_1$ and $B_2$ do not have degree one or both root vertices have degree one. 
  If both root vertices do not have degree one, then $B_1$ and $B_2$ are proper, and $B_1$ and $B_2$ are isomorphic by Lemma \ref{lem:iso}.
  If both root vertices of $B_1$ and $B_2$ have degree one, then $B_1=\wedge B'_1$ and $B_2=\wedge B'_2$, where $B'_1$ and $B'_2$ are two proper branching trees because $B_1$ and $B_2$ are branching trees which do not contain elementary vertices. 
  Furthermore, we have $P(B_1) = P(B'_1)+w_1y$ and $P(B_2) = P(B'_2)+w_2y$.
  Note that $B'_1$ and $B'_2$ are two proper branching trees, so neither $P(B'_1)$ nor $P(B'_2)$ has a term $ty$ for any $t\in\RR_{>0}$.
  Since $P(B_1)=P(B_2)$, $w_1=w_2$ and $P(B'_1)=P(B'_2)$.
  By Lemma \ref{lem:iso}, $B'_1$ and $B'_2$ are isomorphic, and so are $B_1$ and $B_2$.
\end{proof}

\section{Application to phylogenetic analysis}
\label{S:3}

Given a branching tree $B$, we can represent its polynomial $P(B,x,y)$ as a coefficient matrix, where $c^{(i,j)}$ is the coefficient of the monomial $x^iy^j$ in $P(B,x,y)$, where $i,j\geq 0$. 
We use the Canberra distance to measure the difference between two branching trees $B_1$ and $B_2$ \cite{liu2022metric}, which has been shown to be one of the most accurate metrics for tree polynomials \cite{liu2026comp}.
Let $c_1^{(i,j)}$ and $c_2^{(i,j)}$ denote the coefficients of the monomial $x^iy^j$ in $P(B_1,x,y)$ and $P(B_2,x,y)$, respectively.
The Canberra distance between the two branching trees $B_1$ and $B_2$ is defined by the following formulas (\ref{canbdist}) and (\ref{gamma}).
\begin{align}\label{canbdist}
    d(B_1,B_2) = \sum_{i,j\geq0} \gamma(c_1^{(i,j)}, c_2^{(i,j)})
\end{align}

\begin{align}\label{gamma}
\gamma(c_1, c_2) =
\begin{cases}
|c_1 - c_2|/(c_1 + c_2) & \text{if } c_1 \neq 0 \text{ or } c_2 \neq 0 \\
0 & \text{if } c_1 = 0 \text{ and } c_2 = 0
\end{cases}
\end{align}

We apply the polynomial-based Canberra distance to compare phylogenetic trees reconstructed from HIV-1 sequences in three contrasting epidemiological settings as in \cite{liu2022metric}: HIV-1 subtype B sampled from a concentrated epidemic among men who have sex with men in Seattle, USA (Wolf et al. \cite{wolf2017}); HIV-1 subtype C from a village-scale generalized epidemic in Mochudi, Botswana (Novitsky et al. \cite{novitsky2013}); and HIV-1 subtype C from a national survey of the generalized epidemic in South Africa (Hunt et al. \cite{hunt2012}).
These three trees are rooted by midpoint rooting and with branch lengths representing genetic divergence in units of expected substitutions per site \cite{liu2022metric}.
They capture HIV transmission at different spatial scales and across different epidemiological dynamics.
We compare only the underlying rooted trees with branch lengths of the three phylogenetic trees and do not differentiate labels of leaf vertices (taxa).
We use the three trees to demonstrate the capability of the polynomial in distinguishing trees with branch lengths arising from different epidemiological contexts.

The three HIV-1 phylogenetic trees have 1,653 (Wolf et al.), 782 (Novitsky et al.), and 968 leaf vertices (Hunt et al.), respectively. 
To make the trees comparable in size, we randomly subsample each phylogenetic tree to 500 leaf vertices, and draw 100 independent subsamples per source tree, yielding 300 trees in total.
We carry out two parallel analyses on this set: one in which the subsampled trees retain their original branch lengths (unscaled setting), and one in which each subsampled tree's branch lengths are rescaled so that the mean branch length equals one (normalized setting). 
For each of these two settings, we compute the polynomial $P(B,x,y)$ at every subsampled branching tree $B$, and then we compute the pairwise Canberra distance between each pair of the trees.
This results in two $300\times300$ Canberra distance matrices in the two settings, respectively.
We then cluster each distance matrix by $k$-medoids (partitioning around medoids or PAM) \cite{kaufman1990} with $k = 3$, and evaluate clustering accuracy against the known source-tree label.

The polynomial-based Canberra distance with $k$-medoids achieved 100.0\% accuracy (300/300) for subsampled trees with their original branch lengths, and 99.3\% (298/300) for subsampled trees with normalized branch lengths. 
We visualize the two $300\times300$ Canberra distance matrices using Multidimensional Scaling (MDS) \cite{coxcox2001}. See Figure~\ref{fig3}.
The clustering errors for branch-length-normalized trees consist of two Wolf subtrees misassigned to Novitsky. 
Both sitting at the Wolf/Novitsky interface in the MDS space. 
The Hunt cluster is perfectly recovered in both settings. 
For comparison, in \cite{liu2022metric}, the same three datasets were clustered using $k$-medoids with the lattice distance, and the reported accuracy is 72.7\% (misclassification rate 0.273).

\begin{figure}[ht!]
    \centering
    \includegraphics[width=1\linewidth]{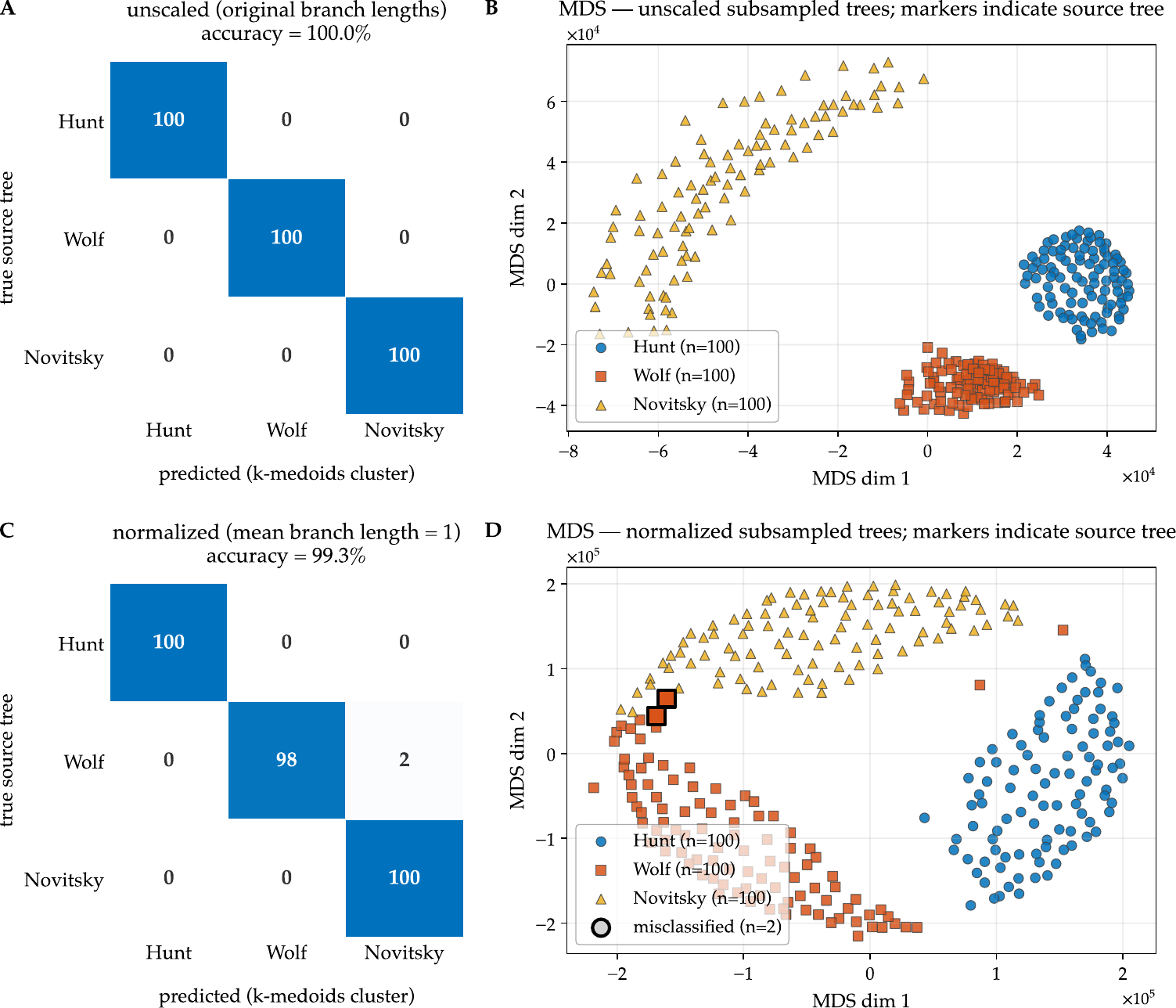}
    \caption{{\bf Clustering results of the HIV trees using the polynomial-based Canberra distance.} {A, C}: Confusion matrices from $k$-medoids clustering ($k=3$) using the polynomial-based Canberra distances between 300 subsampled trees, each with 500 leaf vertices (100 per source tree) under two settings: subsampled trees with original branch lengths (A; accuracy 100\%) and subsampled trees, each with mean branch lengths normalized to one (C; accuracy 99.3\%).
    {B, D}: MDS visualizations of the corresponding $300 \times 300$
    Canberra distance matrices; each marker represents a subsampled tree; marker shape and color denote the true source tree (Hunt: blue circles, Wolf: orange squares, Novitsky: yellow triangles). The two normalized-setting misclassifications are highlighted with black rings in panel D.}
    \label{fig3}
\end{figure}

We observe that the mean branch length of Novitsky subtrees is about 2.3 times larger than Wolf subtrees and about 1.7 times larger than Hunt subtrees.
These scale differences propagate through the polynomial computation, and the magnitudes of maximum coefficients in the polynomials are distinct for the three datasets, where Wolf subtrees have maximum coefficients in $10^5$, Hunt subtrees in $10^6$ and Novitsky subtrees in $10^{13}$.
The distinct mean branch lengths of the three datasets contribute to the 100.0\% accuracy of clustering for subsampled trees with original branch lengths.
When the mean branch lengths are normalized to one for all three datasets, the evolutionary signal captured by the polynomial encoding is still strong enough to recover 298 of 300 subtrees correctly.
This indicates that the three phylogenetic trees have distinct topologies and relative within-tree branch-length distributions.

\section*{Data availability}

Code and data for the experiment and analysis conducted in this paper are available at \url{https://github.com/pliumath/branching-tree-polynomial}.


\section*{Acknowledgments}

P.L. was supported by the start-up fund of the University of Rhode Island, and in part by the Rhode Island Institutional Development Award (IDeA) Network of Biomedical Research Excellence from the National Institute of General Medical Sciences of the National Institutes of Health under grant number P20GM103430. 
We would like to thank Olivier Gascuel and Manolo F. Perez for helpful discussions and suggestions regarding applications to phylogenetic analysis.

\nolinenumbers

\bibliographystyle{unsrtnat}
\bibliography{bibliography}

\end{document}